\documentclass[11pt]{article}

\UseRawInputEncoding 

\newtheorem{theorem}{Theorem}[section]

\newtheorem{conjecture}[theorem]{Conjecture}

\newcommand\qed{\begin{flushright} {\bf q.e.d.} \end{flushright} }
\newcommand\prf{\noindent {\bf Proof :}}  

\newcommand\bits{\{0,1\}}
\newcommand\uu{{\bits^*}}
\newcommand\nn{{\bits^n}}
\newcommand\NN{{\bits^N}}
\newcommand\mm{{\bits^m}}
\newcommand\kk{{\bits^k}}
\newcommand\nnn{{\bits^{n+1}}}
\newcommand\NNN{{\bits^{N+1}}}

\newcommand\nlog{{{n+\lceil\log n\rceil + 1}}}
\newcommand\nl{{{n+\lceil\log n\rceil}}}

\newcommand\pp{{\cal P}}
\newcommand\np{{\cal N}{\cal P}}
\newcommand\npco{{\np \cap \conp}}
\newcommand\conp{{\mbox{co}\np}}
\newcommand\ee{{\cal E}}

\newcommand\gadf{{\mbox{Gad}_{f}}}
\newcommand\sz{{\mbox{Size}}}
\newcommand\prob{{\mbox Prob}}
\newcommand\bs{{\bf s}}

\newcommand\taut{\mbox{TAUT}}
\newcommand\tru{{\mbox{{\bf tt}}}}

\begin{document}

\title{Failure of the strong feasible disjunction property} 

\author{Jan Kraj\'{\i}\v{c}ek}

\date{Faculty of Mathematics and Physics\\
Charles University\thanks{Sokolovsk\' a 83, Prague, 186 75,
The Czech Republic, {\tt jan.krajicek@protonmail.com}}}

\maketitle

\begin{abstract}

A propositional proof system $P$ has the strong feasible disjunction property
iff there is a constant $c \geq 1$ such that whenever $P$ admits a size $s$
proof of $\bigvee_i \alpha_i$ with no two $\alpha_i$ sharing an atom then
one of $\alpha_i$ has a $P$-proof of size $\le s^c$.

We combine the work of Ilango \cite{Ila} and Ren et al. \cite{RWZ} 
with the gadget proof complexity generator of \cite{Kra-generator} and 
rule out the property for strong enough proof systems under the following 
two hypotheses:

\begin{itemize}

\item there exists language $L \in \ee$ that requires size $2^{\Omega(n)}$
circuits even if they are allowed to query an $\np$ oracle,

\item there exists a $\pp/poly$ generator $G : \nn \rightarrow \nnn$
which is a demi-bit in the sense of Rudich \cite{Rud}.

\end{itemize}

\end{abstract}

\noindent
{\bf Keywords:} feasible disjunction property, propositional proof systems,
proof complexity generators, demi-bits.

\section*{Introduction}

A {\bf propositional proof system} (abbr. {\bf pps}) is a p-time map 
$P : \uu \rightarrow \uu$ whose range $Rng(P)$ is exactly the set $\taut$ of 
propositional tautologies, cf. Cook and Reckhow \cite{CooRec}.
We denote by the dot above the disjunction sign:
\begin{equation} \label{28.3.26a}
\dot{{\bigvee}}_{1\le i \le r} \alpha_i
\end{equation}
disjunctions in which  no two formulas $\alpha_i$ have an atom in common.
A pps $P$ has the 
{\bf strong\footnote{The qualification {\em strong } refers to the fact 
that we allow any $r$.} feasible disjunction property} (abbreviated {\bf strong fdp}) iff 
there exists a constant $c \geq 1$ such that whenever a disjunction (\ref{28.3.26a})
has a $P$-proof of size $s$
then one of $\alpha_i$ has a $P$-proof of size $\le s^c$. 

This property was studied in connections with the proof complexity of the 
Nisan-Wigderson generator, cf. \cite{Kra-nwg}. The fdp without the qualification strong (i.e. the case $r=2$) was investigated since
1980s and several unfounded claims that Extended Frege EF has the fdp were put
forward over the years. 

There is also a connection to feasible interpolation (cf. \cite{prf}).
Razborov \cite{Raz95} defined a first-order formal system with 
in-built interpolation and linked the provability of circuit lower bounds in 
that system to the existence of $\pp/poly$ natural proofs against $\pp/poly$. 
A more fundamental level at which to study the provability of circuit lower bounds
is propositional logic (cf. \cite[Sec.19.5]{prf}) and Razborov's argument can be 
turned into a statement
refuting feasible interpolation for strong enough propositional proof systems
assuming that some explicit language requires large circuits 
and that $\pp/poly$ natural proofs against $\pp/poly$ do not exist. 
An entirely analogous argument rules out 
the fdp when assuming that even $\np/poly$ natural proofs against $\pp/poly$ do not
exist, cf. Rudich \cite[Thm.1]{Rud}.
See \cite[Subsec.17.9.2]{prf} for further background.

A failure of the strong fdp for strong proof systems (to be defined in Sec.\ref{prelim})
was previously also established using some
assumptions from proof complexity: Khaniki \cite{Kha23} used an assumption about
the provability of 
reflection principles for an implicit proof systems and \cite{k4} used the theory of 
proof complexity generators.

In this paper we apply the idea behind \cite[Thm.4.2.7]{k4}
and a recent work
of Ilango \cite{Ila} and Ren et al. \cite{RWZ} and we derive the failure of the strong fdp
from the following two hypotheses (both defined in Sec.\ref{prelim}): 
that some language in $\ee$ does not have sub-exponential size circuits even 
if they are allowed to query an $\np$ oracle and the demi-bit conjecture
of Rudich \cite{Rud}.
Our main tool is the gadget generator of \cite{Kra-generator} and 
its property that it can, in a specific sense, hide the non-uniformity of other
generators. 

It would be of interest (to this author at least) to deduce the failure of the
strong fdp from some reasonable hypotheses that themselves do not postulate 
some proof complexity lower bounds. The hypothesis (E-NP) fulfills this requirement
but the demi-bit conjecture does not.
But its advantage over the hypothesis about the non-existence of 
$\np/poly$ natural proofs against $\pp/poly$ is that it is a weaker assumption
and does not target the very specific lower bound tautologies.

\medskip

The paper is organized as follows. Sec.\ref{prelim} gives some 
proof complexity preliminaries.
In Sec.\ref{demi} we present Ilango's \cite{Ila} (somewhat simplified)
construction of a generator from demi-bits.
The main theorem is stated and proved in Sec.\ref{thm}. 
The paper is concluded by a discussion in Sec.\ref{unif} 
of the possibility to construct a uniform hard generator.

We give original references for notions and results in 
the theory of proof complexity generators but all of that can be found also 
in \cite{k4}. The reader requiring a proof complexity background can find it in \cite{prf}.

\section{Proof complexity preliminaries}  \label{prelim}
  
A {\bf propositional proof system} (abbr. {\bf pps}) is a p-time map $P : \uu \rightarrow \uu$
whose range $Rng(P)$ is exactly the set $\taut$ of propositional tautologies, 
cf. Cook and Reckhow \cite{CooRec}.
A pps with {\bf $k(n)$ bits of advice}\footnote{Note that pps can be defined equivalently 
using the provability relation but for pps with advice such definitions are
not equivalent, cf. \cite{CK}.}
(abbr. pps$/k(n)$) is such $P$ computed by a p-time algorithm with $k(n)$ bits of advice on inputs of size $n$, cf. Cook and K. \cite{CK}.
  
A pps $P$ is {\bf strong} iff $P$ is Extended Frege system EF augmented by a p-time subset $A\subseteq \taut$ as additional axioms:
any substitution instance of any formula in $A$ can be used in a proof.
Strong proof systems have useful technical properties; for example, they simulate the modus ponens rule and substitution of constant in p-size.
Any pps can be p-simulated by a strong proof system. See \cite[Sec.2.4]{k4}.
  
The {\bf lengths-of-proofs function} $\bs_P$ is defined as:
$$
\bs_P(\varphi)\ :=\ \min\{|\pi|\ |\ P(\pi)=\varphi    \}\ .
$$
A pps $P$ {\bf simulates} pps $Q$ (notation $P \geq Q$) iff 
$\bs_P(\varphi)\le \bs_Q(\varphi)^{O(1)}$ and $P$ {\bf p-simulates} $Q$ 
($P \geq_p Q$) iff there is a p-time map $f$ such that $P(f(\pi)) = Q(\pi)$. 
  
\begin{theorem}[{Cook-K.\cite[Thm.6.6]{CK}}] \label{ck}
There exists a pps with $1$ of advice that simulates all pps with $1$ of advice and,
in particular, it simulates all pps.
\end{theorem}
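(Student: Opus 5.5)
The plan is to build an optimal system with one advice bit by having each input length $n$ use its single advice bit to pick out the ``relevant'' system for proofs of that length, working within one simulation framework. Fix an effective enumeration $(M_i, b_i)$ of pairs consisting of a clocked Turing machine $M_i$ (with an explicit polynomial time bound $n^i+i$) and a possible advice behaviour. Since advice for a pps$/1$ is a single bit per input length, a pps$/1$ $Q$ is given by a machine $M$ together with a function $a:\mathbb{N}\to\bits$. The advice bit $a(n)$ must guarantee that $M(\cdot,a(n))$ outputs only tautologies on inputs of length $n$.

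The universal system $U$ will accept proofs of the form $\pi' = \langle 1^i, \pi, 1^{m}\rangle$, with padding chosen so that the length $|\pi'|$ determines $i$ and $|\pi|$. For instance, take $|\pi'| = \langle i, |\pi|\rangle$ under a pairing function on lengths, padding with $1^m$ as needed. On inputs of length $N$ encoding $(i,n)$, the advice bit of $U$ is $1$ iff $M_i$, run with time bound $n^i+i$ and with \emph{some} advice bit $b\in\bits$, produces only tautologies on all inputs of length $n$. That alone is not enough, because we also need to know which $b$. The fix is to enumerate triples $(M_i, b)$ with the bit $b$ built into the index. Then $U$'s advice bit at length $N=\langle (i,b), n\rangle$ is $1$ iff the machine $M_i(\cdot, b)$ outputs only tautologies on length-$n$ inputs within its clock. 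On such an input, if the advice bit is $1$, $U$ outputs $M_i(\pi, b)$. Otherwise it outputs a trivial tautology, or $\pi$ itself if $\pi$ is a proof in a fixed sound pps like Frege, so that $U$ is complete. Soundness holds by the choice of advice, and $U$ runs in polynomial time because of the padding: $|\pi'|\ge n^i+i$ can be enforced in the pairing.

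For simulation, let $Q$ be a pps$/1$ given by $M_i$ with advice function $a$. A $Q$-proof $\pi$ of $\varphi$ of length $n$ is mapped to $\langle 1^{(i,a(n))}, \pi, \text{padding}\rangle$, whose length is polynomial in $n$ for fixed $i$. At that length the advice bit of $U$ is $1$, since $M_i(\cdot,a(n))$ is sound on length $n$, so $U$ outputs $\varphi$. This gives $\bs_U(\varphi)\le \bs_Q(\varphi)^{O(1)}$. It is simulation and not p-simulation, because the translation needs $a(n)$, which is non-uniform; that matches the statement. Ordinary pps are pps$/1$ ignoring advice, so the ``in particular'' follows.

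The main obstacle is the bookkeeping that lets a \emph{single} advice bit per length of $U$ encode the soundness of exactly one (machine, bit, input-length) triple. This requires a pairing on lengths that is injective and polynomially bounded. I would choose $N = 2^{\langle i,b\rangle}\cdot 3^{n}$, or better the polynomial pairing $N=(n+\langle i,b\rangle)^{2}+\langle i,b\rangle$ padded up to exceed the clock, and then check that $i$, $b$ and $n$ are recoverable in polynomial time from $N$.
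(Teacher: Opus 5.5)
Your proposal is correct and is essentially the Cook--Kraj\'{\i}\v{c}ek construction that the paper cites without proof: the length of the universal proof encodes the machine, its advice bit and the length $n$ of the simulated proof, and the single advice bit of the universal system certifies that this machine is sound at length $n$. The only thing to tidy is the length pairing, since $2^{\langle i,b\rangle}3^n$ is exponential in $n$ and ``padding up'' the quadratic pairing can break injectivity; a choice such as $N = 2^{j}(2n+1)^{j+2}$ with $j=\langle i,b\rangle \ge i$ is injective, polynomial in $n$ for fixed $j$, and exceeds the clock $n^i+i$.
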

Note that the simulation of the Cook-Reckhow pps is actually a p-simulation.
  
\bigskip
  
A {\bf generator} is any map $g : \uu \rightarrow \uu$ such that its restriction $g_n$ 
to $\nn$ maps $\nn$ into $\mm$ where $m = m(n) > n$ ($g$ is {\bf stretching})
and is computed\footnote{Note that maps computed in (non-uniform)
$\np \cap co\np$ are also useful in this context, cf. \cite{k4}.}
by a circuit $C_n$ of size polynomial in $m$.
The {\bf $\tau$-formula} $\tau(g)_b$ for $b \in \mm$ is a canonical propositional formula 
expressing that $b \notin Rng(C_n)$. cf. \cite{Kra-wphp,ABRW}. 
Its size is polynomial in $m$.
  
An example of a generator related to the discussion in the Introduction 
is the truth-table function $\tru_{s,k}$, $s=s(k)$
and $s \log s << 2^k$, sending $O(s \log s)$ bits describing a size $s$ circuit 
with $k$ inputs into the size $2^k$ truth table of the function it computes.
The formula $\tau(\tru_{s,k})_b$ then expresses that the Boolean function with 
the table $b$ has the circuit size bigger than $s$.

A generator $g$ is {\bf hard for $P$} iff for any $c \geq 1$ the inequality
$$
\bs_P(\tau(g)_b) \ \le\ |\tau(g)_b|^c\ \le \ |b|^{O(c)}
$$
holds for a finite number of $b \in \uu$.
  
A hypothesis motivating a significant part of the theory of proof complexity generators
is the following one.

\begin{conjecture}[{\cite{Kra-dual}, \cite{Kra-strong}, \cite[Sec4.2]{k4}}] \label{conj}
There exists a generator 
hard for all pps (equivalently, its range intersects all 
infinite $\np$ sets).

In fact, such $g$ exists p-time computable.

\end{conjecture}

Our tool will an argument underlying Theorem \ref{enp} that uses the following 
hypothesis:

\begin{itemize}
	
	\item [(E-NP)] 
	{\em There exists language $L \in \ee$ such that for every $A \in \np$
		there is $\epsilon > 0$
		such that $L \notin_{i.o.} Size^A(2^{\epsilon n})$, where
		$\sz^A(s(n))$ the class of languages $L$ such that
		$L_n$, all $n \geq 1$, can be computed by a circuit of size $\le s(n)$
		that is allowed to query oracle $A$.}
\end{itemize}
The notation for the hypothesis is the same as in \cite{k4}.

\begin{theorem}[{\cite[Thm.4.2.7]{k4}}] \label{enp}
{\ }

Assume hypothesis (E-NP) and
that Conjecture \ref{conj}
holds true for a p-time generator $g$.	
Then there exists a proof system $Q$ such that no strong proof system $P$ 
that simulates $Q$ has the fdp. 
	
\end{theorem}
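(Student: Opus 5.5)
Here $P$ ranges over strong proof systems and fdp means the case $r=2$. The plan is to combine the p-time generator $g:\nn\to\mm$ hard for all pps with a derandomization of the truth-table generator made possible by (E-NP).

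\textbf{Using (E-NP).} Under (E-NP), take a p-time computable map $b\mapsto h(b)$ whose output is a string that is hard for size $2^{\epsilon k}$ circuits with an $\np$ oracle. One candidate is the truth table of $L$ on length $k = O(\log |b|)$; alternatively, use an NW/Impagliazzo--Wigderson style generator against $\np$-oracle circuits, seeded by $L$.

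\textbf{Defining $Q$.} The tautologies whose disjunction will lack short proofs of its disjuncts are $\tau(g)_b$ and $\tau(\tru_{s,k})_{a}$ on disjoint atoms. The point of (E-NP) is this: if $a$ is the table of $L_k$, then $\tau(\tru_{s,k})_a$ is a true statement. Moreover, one of the formulas in any disjunction $\tau(g)_b \vee \tau(\tru)_{a}$ should be provable in $Q$ by a trivial witness, namely one that $Q$ verifies using advice-free computation of $L$ in exponential time, which is polynomial in $|a|=2^k$.

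Concretely, let $Q$ prove $\tau(g)_{b}\dot\vee\tau(\tru_{s,k})_{a}$ (atoms renamed) whenever $a=\chi_{L_k}$, $k=c\log|b|$. This is checkable in time $2^{O(k)}=|b|^{O(1)}$, and it is a tautology because $\tau(\tru)_a$ is valid by (E-NP).

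\textbf{Deriving the contradiction.} Suppose $P\ge Q$ has the fdp. Then for each such pair, $P$ proves in polynomial size either $\tau(g)_b$ or $\tau(\tru)_a$. The first option holds for only finitely many $b$, by hardness of $g$. So for almost all lengths $P$ has short proofs of $\tau(\tru_{s,k})_{\chi_{L_k}}$.

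This is not yet contradictory, so the argument must be rearranged. I would instead apply the fdp to the disjunction of $\tau(g)_b$ with a formula expressing ``$\chi_{L_k}$ is hard'', replacing the fixed $b$ by a variable $b$ handled via substitution. Namely, $Q$ proves $\neg\tau(g)_{x}\to\ldots$, so that $P$ proving $\tau(\tru)$ yields an $\np$ procedure: ``guess a short $P$-proof''. Combined with the failure of $\tau(g)_b$ for most $b$, this would give a small $\np$-oracle circuit deciding $L_k$, contradicting (E-NP).

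Precisely: for an input $y\in\kk$, the disjunction is $\tau(g)_b\dot\vee \ell_y$, where $\ell_y$ is the literal/formula asserting $L(y)$ or $\neg L(y)$ correctly, and $Q$ certifies it by computing $L(y)$ in time $2^{O(k)}$. Then fdp means that either $\tau(g)_b$ has a short proof, which fails for almost all $b$, or the correct value of $L(y)$ has a short $P$-proof. An $\np$ oracle can find such a proof, so a circuit of size $\mathrm{poly}(|b|)=2^{O(k)}$ with a suitable small constant decides $L_k$.

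\textbf{Main obstacle.} The step I expect to be hardest is tuning the parameters. The circuit size $|b|^{c}$ must be at most $2^{\epsilon k}$ while $Q$ still runs in polynomial time. This forces $k\ge (c/\epsilon)\log|b|$, but the $c$ from the fdp is unknown in advance. So the construction of $Q$ must handle all constants at once, for instance by letting $Q$ accept the disjunction for every $k\le|b|$, i.e. by choosing $b$ of length $2^{k/t}$ for slowly growing $t$ (padding).
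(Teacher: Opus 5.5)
\paragraph{The gap in your construction.} Your final construction cannot work, and parameter tuning is not the issue. In the disjunction $\tau(g)_b\dot\vee\ell_y$, the second disjunct is a correct statement about a fixed deterministic $\ee$-computation. Under any reasonable formalization it therefore already has EF-proofs of size polynomial in $|\ell_y|$, hence $P$-proofs, since $P$ is strong. So the conclusion of the fdp holds trivially, and neither the fdp nor the hardness of $g$ is really used. If your circuit-size accounting were right, it would refute (E-NP) outright.

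The accounting fails because the fdp bounds the proof of $\ell_y$ by $s^{c}$ with $s\ge|\ell_y|=2^{\Omega(k)}$, the constant coming from the running time of $L$. It does not bound it by $|b|^{O(1)}$. Take a fixed $\np$ oracle running in time $q^{d}$ on queries of length $q$. A circuit that finds such a proof through this oracle needs queries of length about $|\ell_y|^{1/d}$. In (E-NP) the $\epsilon$ is chosen after the oracle $A$, so this is consistent with (E-NP): it is just the trivial padded computation of $L$. Shrinking $|b|$ relative to $2^k$ (your padding idea) changes nothing, because the bottleneck is $|\ell_y|$ itself. In general the fdp gives a contradiction only for a disjunction with short proofs all of whose disjuncts are hard, and the only hard formulas available here are the $\tau(g)_b$.

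\paragraph{The missing idea.} Use (E-NP) not to put a hard object into the disjunction, but to derandomize the choice of a string outside $Rng(g)$. Since $g$ is p-time and stretching, $Rng(g_n)$ is an $\np$ set of density at most $1/2$ in $\mm$. Hence a Nisan--Wigderson generator $H$ with seed length $O(\log m)$, based on $L$, is secure against it, so $Rng(H)\not\subseteq Rng(g_n)$. This is \cite[L.4.2.6]{k4}, which reappears as Claim (3) in the proof of Theorem \ref{main}.

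Now let $Q$ be EF plus the axioms $\bigvee_{b\in Rng(H)}\tau(g)_b$, with atoms renamed apart. These are tautologies. They form a p-time set because $g$ is uniform and $|Rng(H)|\le m^{O(1)}$. Uniformity is exactly what fails for Ilango's generator, which is why Theorem \ref{main} needs the gadget generator. If $P\ge Q$ had the disjunction property, then for every $n$ some $\tau(g)_b$ with $b\in Rng(H)$ would have a $P$-proof of size $m^{O(1)}$, contradicting the hardness of $g$.

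These disjunctions have $m^{O(1)}$ disjuncts, so the argument refutes the strong fdp; this is how the Introduction describes the result of \cite{k4}. By insisting on $r=2$ you set yourself a target that this method does not reach.
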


\section{Generator from demi-bits}  \label{demi}

Rudich \cite{Rud} studied the possibility to generalize natural proofs 
to $\np$-natural proofs and for that he proposed
two strengthenings of the pseudo-random generator hypothesis to the statements 
that there is
a $\pp/poly$ generator whose range intersects all $\np$ sets with a 
non-subexponential density
(the demi-bit conjecture) or even having the intersection of a similar 
relative density in the range as is the density of the $\np$ set (the super-bit conjecture). 
We shall define the former bellow.

Given a generator
$g$, $g_n : \nn \rightarrow \mm$, its {\bf demi-hardness}
is a function $s(n)$ such that $s(n)$ is the minimal size of a non-deterministic
circuit $D$ that defines a subset of $\mm \setminus Rng(g_n)$ 
with measure in $\mm$ at least $s(n)^{-1}$.

\begin{conjecture}[{Demi-bit conjecture, Rudich \cite[Conj.5]{Rud}}] \label{rud}
	{\ }
	
There exists $\pp/poly$ generator $g$ with the stretch $m(n) = n + 1$ having 
the demi-hardness $2^{n^{\epsilon}}$ for some $\epsilon > 0$.
\end{conjecture}
In fact, Rudich \cite{Rud} proposes a uniform p-time candidate
based on subset sum, cf. \cite[Conj. 4]{Rud}. 
The stretch can be improved too but not too much: Tzameret and Zhang
\cite{TzaZha} showed how to get the stretch $n + n^{1-\Omega(1)}$ (we will need only
$\nlog$).

Assuming the demi-bit conjecture Ilango \cite{Ila} and Ren\footnote{Their construction
needs a larger stretch $n^{O(1)}$ that is not 
known to follow from Conjecture \ref{rud}.} 
et al. \cite{RWZ} proved
a weaker form of Conjecture \ref{conj}: 
for every pps $P$ there is a $\pp/poly$ proof complexity
generator $g_P$ hard for $P$.
Both papers added something extra. The generator $g_P$ 
constructed in Ren et al. \cite{RWZ}
is, in fact, pseudo-surjective for $P$ if the number of rounds in the definition of pseudo-surjectivity is suitably limited
(we shall not define here the pseudo-surjectivity, cf. \cite{Kra-dual}, but we will 
return to this in Sec. \ref{unif}). The extra in 
\cite{Ila} is the following theorem.

\begin{theorem} [Ilango \cite{Ila}] \label{ila}
Assuming the demi-bit conjecture there is a $\pp/poly$ 
proof complexity generator $g$ hard for all pps $P$.
\end{theorem}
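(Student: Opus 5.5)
Starting from the demi-bit generator $G:\nn \rightarrow \nnn$ with demi-hardness $2^{n^\epsilon}$, the plan is to build a single $\pp/poly$ generator $g$ whose $\tau$-formulas are hard for every pps $P$. I would argue by contraposition. If some pps $P$ had short proofs of $\tau(g)_b$ for infinitely many $b$, then the set of those $b$ would be an $\np$ set avoiding $Rng(g)$. For an appropriate construction this set would also be dense, and that density would contradict demi-hardness.

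First, extend the stretch to $m = \nlog$, either via Tzameret and Zhang \cite{TzaZha} or by a direct hybrid argument for demi-bits. With $n+\lceil\log n\rceil+1$ output bits, the complement of $Rng(g_n)$ has measure at least $1-2^{-\lceil\log n\rceil-1}\ge 1-1/(2n)$. Consequently any nondeterministic circuit that accepts a $\ge 1/n^{O(1)}$ fraction of outputs outside the range, while rejecting the range, breaks demi-hardness. Note that demi-hardness already constrains nondeterministic circuits, and these are exactly what a "short $P$-proof exists" predicate provides: guess a proof of size $\le |\tau|^c$ and check it with $P$.

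The main obstacle is turning "infinitely many hard $b$" into "a dense set of $b$ with short proofs". Hardness for $P$ only requires that for each $c$ only finitely many $b$ have short proofs, while the negation yields merely infinitely many, possibly very sparse, $b$. To get around this, I would follow Ilango's device and make $g$ "self-reducible", so that short proofs for a few strings force short proofs for many. Concretely, let $g$ take a seed $x$ together with a description of a p-time-sampleable shift or hash (a $\pp/poly$ choice), and output $G(x)\oplus z$ for a pseudo-random mask $z$ computed from part of the seed. Then, using the fact that strong proof systems can substitute constants, a short proof of $\tau(g)_b$ would translate into short proofs of $\tau(g)_{b'}$ for a large family of $b'$ obtained by re-randomization. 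I expect this to give a non-deterministic circuit of size $n^{O(c)}$ which, on input $y$, guesses a re-randomization together with a $P$-proof. Averaging would then show that it accepts a polynomial fraction of $\mm$ outside $Rng(G)$. This size is below $2^{n^\epsilon}$, so demi-hardness is contradicted.

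Finally, universality over all pps: rather than fixing $P$, I would apply the argument to the optimal-with-advice system of Theorem \ref{ck}. Proofs in that system are checked in p-time with one bit of advice, and a nondeterministic circuit can hardwire that bit. So hardness for it implies hardness for all pps, since they are simulated. The non-uniformity of $G$ is harmless, as the definition allows circuits $C_n$ of polynomial size. The delicate step I would spend most effort on is the re-randomization and density amplification. My first attempt would be a direct-product construction: output many independent copies of $G$ blocks, so that a short refutation for one $b$ yields, by substitution, refutations for a constant-fraction set of inputs.
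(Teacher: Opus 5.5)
\textbf{There is a genuine gap at the density-amplification step you flag as delicate.} It cannot be closed in the direction you propose.

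For $g(x,r)=G(x)\oplus z_r$, substituting constants for $r$ in a short proof of $\tau(g)_b$ gives short proofs of $\tau(G)_{b\oplus z_r}$ for the individual masks. It does not give proofs of $\tau(g)_{b'}$ for many other $b'$. The seed must be shorter than $m$, so there are fewer than $2^{m-n}$ masks. Hence one easy $b$, or polynomially many, yields easy targets for $G$ forming less than a $2^{-n}$ fraction of $\mm$. That is far below the measure $2^{-n^{\epsilon}}$ that demi-hardness forbids.

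As you note yourself, the negation of hardness only provides a possibly very sparse set of easy $b$, and no averaging turns it into a dense set. So the contrapositive ``few easy $b$ imply a dense $\np$ set outside the range'' is not available for a generator fixed in advance.

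The direct-product fallback fails too. $\tau$ of several independent copies of $G$ is a disjunction $\bigvee_i\tau(G)_{b^i}$ whose disjuncts share no atoms. Extracting a short proof of one disjunct from a short proof of the disjunction is exactly the strong feasible disjunction property, which the paper's main theorem shows fails.

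\textbf{The missing idea is to apply demi-hardness first, and only then choose the masks non-uniformly, as a function of the easy set.}

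Fix the optimal pps$/1$ $P$ of Theorem \ref{ck} and a threshold $t(n)$ with $n^{\omega(1)}\le t(n)\le s(n)^{o(1)}$. Define $E=\{a\in\mm : \bs_P(\tau(G)_a)\le t(n)\}$. This set is disjoint from $Rng(G)$, and it is defined by a non-deterministic circuit of size $t(n)^{O(1)}<s(n)-1$ that hardwires the advice bit and guesses the proof. Demi-hardness therefore gives $E$ measure below $1/(s(n)-1)$.

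Now pick random $s_1,\dots,s_n\in\mm$. For a fixed $b$, the probability that $b\oplus s_j\in E$ for every $j$ is below $(s(n)-1)^{-n}$. A union bound over all $2^m$ strings $b$ (Sipser's covering argument) yields shifts such that every $b\in\mm$ has some $j$ with $b\oplus s_j\notin E$.

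Hardwire these shifts; they are random advice, not pseudo-random masks computed from the seed. Put $g(u,j)=G(u)\oplus s_j$, with $j$ ranging over $\lceil\log n\rceil$ bits. Suppose a strong system $Q$ has a size $m^{O(1)}$ proof of $\tau(g)_b$. Substituting the good $j$ turns it into a proof of $\tau(G)_{b\oplus s_j}$. Since $P$ simulates $Q$, we get $\bs_P(\tau(G)_{b\oplus s_j})\le m^{O(1)}<t(n)$ for large $n$, contradicting $b\oplus s_j\notin E$.

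This handles every $b$ of every large length at once, with no density amplification needed. Your generator shape $G(x)\oplus z$ and your use of Theorem \ref{ck} for universality were right. What was missing is that the shifts must be chosen after, and depending on, the universal easy set $E$.
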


The construction of $g_P$ hard for $P$   
in \cite{Ila} (especially as presented in \cite[Appendix A]{RWZ}) is very
simple and the theorem was deduced from it using the countability of 
the class of Cook-Reckhow proof systems and the Borel-Cantelli lemma. 
We shall present a somewhat simplified proof using proof systems with 
advice\footnote{\cite{RWZ} formulate their results for non-uniform proof systems 
but what they actually mean are $\np/poly$ subsets
of $\taut$.}
and Theorem \ref{ck}.

\bigskip
\noindent
{\bf Proof of Theorem \ref{ila}:}

\paragraph {(1)} Let $m := \nlog$ and take $G : \nn \rightarrow \mm$ a generator with the
demi-bit hardness $s(n) \geq n^{\omega(1)}$ guaranteed to exists by Conjecture \ref{rud}.

\paragraph {(2)} Let $P$ be a pps/1 simulating all Cook-Reckhow proof systems provided by
Theorem \ref{ck}.

\paragraph {(3)} Take any $t(n)$ such that $n^{\omega(1)} \le t(n) \le s(n)^{o(1)}$
and define $A \subseteq \mm$ as
$A\ :=\ \{ a \in \mm\ |\ \bs_P(\tau(G)_a) > t(n) \}$.
Then $\mm \setminus A$ is disjoint with $Rng(G)$ and it is
defined by a non-deterministic circuits of size 
$t(n)^{O(1)} \le s(n)-1$ and hence the measure of $\mm \setminus A$
must be less than $1/(s(n)-1)$. This implies that the measure of
$A$ satisfies $|A|2^{-m} > 1 - 1/(s(n)-1)$.

\paragraph {(4)} {\bf Claim} (after Sipser \cite{Sip83}):
$$
\prob_{s_1,\dots, s_n \in \mm}[\mm = \bigcup_j s_j \oplus A] \geq 
1- (s(n)-1)^{-n}2^m \geq 1 - o(1)\ .
$$

\paragraph {(5)} Following Ilango \cite{Ila} as presented in Ren et al.
\cite[Appendix A]{RWZ} define, given $s = (s_1, \dots, s_n) \in (\mm)^n$,
generator
$$
g_s : \bits^\nl \rightarrow \bits^\nlog
$$
that from input $x = (u,j) \in \nn \times \bits^{\lceil \log n\rceil}$ computes:
$$
g_s(u,j)\ :=\ G(u) \oplus s_j\ .
$$

\paragraph {(6)} {\bf Claim:} {\em If $s$ satisfies 
$\mm = \bigcup_j s_j \oplus A$ then generator $g_s$ is hard for all Cook-Reckhow proof 
systems.}

\medskip

For the sake of a contradiction assume that a strong proof system $Q$ satisfies
$\bs_Q(\tau(g_s)_b) \le m^{O(1)}$, for some $b \in \mm$ and $n >> 0$. By Claim (4)
there is $a \in A$ such that $b = a \oplus s_j$ for some $j \le n$.
Hence the $Q$-proof of $\tau(g_s)_b$ can be extended by a size $m^{O(1)}$ proof to
a derivation of $G(x) \neq a$ which is just $\tau(G)_a$. Thus
$\bs_Q(\tau(G)_a) \le m^{O(1)}$ and hence 
$\bs_P(\tau(G)_a) \le m^{O(1)} < t(n)$ too. But that contradicts the definition of $A$.

\qed

The obvious advantage of the demi-bit conjecture 
over Conjecture \ref{conj} is that it 
relates to pseudo-randomness, cryptography and natural proofs, notions that are more
familiar to complexity theorists than proof complexity is. That does not mean that
it is more plausible. 
I perceive as a significant difference between the demi-bit conjecture (and also 
the (E-NP) hypothesis) and Conjecture \ref{conj}
that the later limits the ability of a 
{\em uniform} adversary (an $\np$ set in the complement of the generator) 
while the former limits {\em non-uniform} adversaries (deterministic
or non-deterministic). Considering the fact that we know essentially nothing about 
the power of large unrestricted circuits I think that no hypothesis limiting their 
power ought to be a priori branded as {\em plausible}. Perhaps the qualification {\em consistent}
(tacitly {\em with present knowledge}) would be more accurate (and it has the advantage that 
two contradictory hypotheses we cannot decide at present can be both consistent).

\section{The failure of the fdp} \label{thm}

We would like to combine Theorems \ref{enp} and \ref{ila} but the 
former theorem requires a uniform generator while the generator provided by
the latter theorem is not uniform.
To overcome this obstacle we shall use the gadget generator of 
\cite{Kra-generator} as it can, in a sense, hide the non-uniformity. 
It is defined as follows. Let 
$$
f\ : \ \bits^\ell \times \kk\ \rightarrow\ \bits^{k+1}
$$
be a p-time function where $\ell = \ell(k)$ depends on $k$; we call $f$ the 
{\bf gadget function}.
The {\bf gadget generator based on $f$} is a function
$$
\gadf\ : \ \NN  \rightarrow \NNN
$$
where $N := \ell + k (\ell +1)$ defined as follows:

\begin{enumerate}
		
\item Input $x \in \nn$ is interpreted as $\ell + 2$ strings
$$
v, u^1, \dots, u^{\ell + 1}
$$
where $v \in \bits^\ell$ and $u^i \in \kk$ for all $i$.
	
\item Output $y = \gadf(x)$ is the
concatenation of $\ell + 1$ strings $w^i \in \bits^{k+1}$ where:
$$
w^i\ :=\ f(v, u^i)\ .
$$
\end{enumerate}

For a fixed gadget $v := c \in \bits^\ell$ denote by $f_c$ the function $\kk \rightarrow \bits^{k+1}$ computed 
by the gadget function with $c$ substituted for $v$ and note that
then we have:
\begin{equation} \label{obser}
	\tau(\gadf)_b(v/c) \ = \ \bigvee_{i \in [\ell+1]} \tau(f_c)_{b^i} 
\end{equation}
with no atoms occurring in more than one formula $\tau(f_c)_{b^i}$.

\begin{theorem} \label{main}
Assume the (E-NP) hypothesis and Conjecture \ref{rud}. 
Then there exists a pps $Q$ such that no pps $P$ that simulates
$Q$ has the strong fdp property.
\end{theorem}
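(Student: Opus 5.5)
Following the template of Theorem \ref{enp}, the plan is to build the gadget generator $\gadf$ whose gadget function $f$ interprets the advice string $v$ as the non-uniform data in Ilango's construction, namely a circuit for $G$ on $\nn$ together with the shifts $s=(s_1,\dots,s_n)$. Concretely, take $k := \nl$. Let $f(v,u)$, for $v \in \bits^\ell$ with $\ell = k^{O(1)}$, read $v$ as a circuit $C$ for $G$ on $\nn$ and strings $s_1,\dots,s_n \in \mm$, and output $g_s(u) = C(u') \oplus s_j$ for $u=(u',j)$. This makes $f$ p-time and uniform, so $\gadf$ is a p-time generator. The non-uniformity of Ilango's $g_s$ is thereby hidden in the $v$-part of the seed.

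The first step is to show that $\gadf$ is hard for every pps. Any $b$ in the range of $\gadf$ needs a common gadget $v$, but the $\tau$-formula quantifies over all $v$. Substituting the good gadget $c$ (a correct circuit for $G$ and shifts $s$ satisfying Claim (4)) and using (\ref{obser}), a short proof of $\tau(\gadf)_b$ yields, after substitution of constants, a short proof of the disjoint disjunction $\bigvee_i \tau(g_s)_{b^i}$. I would not try to prove that $\gadf$ is hard for all pps outright, since that would need the fdp we are refuting. Instead I would follow \cite[Thm.4.2.7]{k4} and work by contradiction.

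So the main argument runs as follows. Let $Q$ be a proof system that proves the formulas $\tau(\gadf)_b$ for those $b$ which are not in the range. Build $Q$ from (E-NP) by the argument underlying Theorem \ref{enp}: (E-NP) gives hitting-set/derandomization strings that let a proof system with short proofs certify the needed instances. Suppose a strong $P \ge Q$ has the strong fdp with constant $c$. Take short $P$-proofs of $\tau(\gadf)_b$ and substitute $v/c$; this stays short because $P$ is strong. The strong fdp then gives one $i$ with $\bs_P(\tau(g_s)_{b^i}) \le m^{O(1)}$, with a fixed exponent. This contradicts Claim (6) as soon as $b^i$ ranges over strings where Ilango's argument applies, and Claim (6) holds for every $b^i$ when $s$ satisfies Claim (4). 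One point needs care here: Claim (6) uses $P$ simulated by the universal pps/1 of Theorem \ref{ck}, which is fine since $P$ is Cook--Reckhow.

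The main obstacle is producing $Q$, that is, guaranteeing polynomial-size proofs of many $\tau(\gadf)_b$ in some pps, in the way (E-NP) is used in Theorem \ref{enp}. The idea I would try first: (E-NP) yields a p-time (in $2^{O(\log N)}$) generator $H$ fooling $\np$-oracle circuits, so that $\np$-definable dense sets are hit. Because $\gadf$ has stretch 1, most $b \in \NNN$ lie outside its range. Let $Q$ accept as axioms $\tau(\gadf)_b$ for $b$ in the range of $H$, after verifying $b \notin Rng(\gadf)$; this verification is a coNP check, which $Q$ handles by having the proof carry it, or by a $1$-bit-advice trick via Theorem \ref{ck}. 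The exact bookkeeping should be copied from \cite[Thm.4.2.7]{k4}, replacing its $g$ by $\gadf$ and checking that only p-time computability of the gadget generator was used there.
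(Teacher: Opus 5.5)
\textbf{There is a genuine gap: your $Q$ is not a Cook--Reckhow proof system.}

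Your gadget (circuit for $G$ together with the shifts $s$, packed into $v$) is fine. So is your endgame: substitute the good gadget, use (\ref{obser}), let the strong fdp give one $i$ with a short proof of $\tau(g_s)_{b^i}$, and contradict Claim (6). The gap is the step you postpone, the construction of $Q$.

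The axiom set you propose is $\{\tau(\gadf)_b : b \in Rng(H),\ b \notin Rng(\gadf)\}$. Deciding membership requires a $\conp$ test, so this set is not known to be p-time. Neither of your fixes works:
\begin{itemize}
\item A certificate of $b \notin Rng(\gadf)$ carried inside the proof is a short proof of $\tau(\gadf)_b$. That is exactly what you are trying to produce, so this is circular.
\item Dropping the check and accepting $\tau(\gadf)_{H(z)}$ for every seed $z$ is unsound. (E-NP) only guarantees that \emph{some} $H(z)$ lies outside $Rng(\gadf)$; the other formulas need not be tautologies.
\item One bit of advice cannot identify a good seed among $N^{O(1)}$ candidates. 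Even $O(\log N)$ bits would only give a proof system with advice, whereas the theorem asks for a pps $Q$. Theorem \ref{ck} does not help here.
\end{itemize}

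\textbf{The missing idea: let the strong fdp choose $b$ as well as $i$.} For each $N$, take as the single extra axiom the disjunction $\dot{\bigvee}_{b \in Rng(H)} \tau(\gadf)_b$. The $\tau$-formulas for different $b$ must be written in disjoint sets of atoms; otherwise the disjunction is trivially true and the fdp does not apply. This formula has the properties you need:
\begin{itemize}
\item It is a tautology precisely because $Rng(H) \not\subseteq Rng(\gadf)$.
\item It has size $N^{O(1)}$.
\item The set of these formulas is p-time, since $H$ has only $N^{O(1)}$ seeds and is computable in time $N^{O(1)}$.
\end{itemize}
Substituting the good gadget into every disjunct gives the disjoint disjunction $\dot{\bigvee}_{b \in Rng(H)} \dot{\bigvee}_{i \in [\ell+1]} \tau(g)_{b^i}$ of $N^{O(1)}(\ell+1)$ formulas. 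The strong fdp then selects a pair $(b,i)$ with $\bs_P(\tau(g)_{b^i}) \le N^{O(1)}$. This happens for every $N$, contradicting the hardness of $g$.

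\textbf{A smaller point.} You assume $P$ itself is strong in order to substitute $v/c$. That only proves the statement for strong $P$. If $Q$ is EF plus the axioms above, then $Q$ is strong. The substitution can therefore be done in $Q$ in polynomial size and transferred to any $P$ that simulates $Q$, which is what the theorem requires.
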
  

\prf

\paragraph {(1)} Let $g$ be a $\pp/poly$ generator provided by Theorem \ref{ila}
and assume $g_n$ is computed by circuit $C_n$ that is described by $\ell \le n^{O(1)}$ 
bits.

\paragraph {(2)} We take $\gadf$ determined by the following data:
\begin{itemize}

\item $k := \nl$

\item $|v| = \ell$ and $v$ is interpreted as a description of circuit $C$
with $k$ inputs

\item $f(v,u) := C(u)$

\end{itemize}
We have $\gadf : \NN \rightarrow \NNN$ and it is a uniform 
p-time function.

\paragraph {(3)} {\bf Claim} (\cite[L.4.2.6]{k4}): {\em 
Assume (E-NP). Then there is $c \geq 1$ and a map 
$$
H : \bits^{c\log N} \rightarrow \NNN
$$
computed in time $N^{O(1)}$ such that 
$Rng(H) \not \subseteq Rng(\gadf)$.}

\smallskip

$H$ is the Nisan-Wigderson generator based on a function so hard that
$H$ is secure against $Rng(\gadf)$.

\paragraph {(4)} Define a strong proof system $Q$ that extends EF by an extra set
of tautologies
\begin{equation} \label{extra}
\bigvee_{b \in Rng(H)} \tau(\gadf)_b
\end{equation}
for all $n \geq 1$. Note that these extra tautologies
have size $N^{O(1)}$ and that they 
form a p-time set and hence $Q$ is indeed a Cook-Reckhow proof system.

\paragraph {(5)} Assume now that $P$ is any pps that simulates $Q$. Hence 
all tautologies (\ref{extra}) have p-size $P$-proofs. Substitute in them for the gadget
$v$ the (description of) circuit $C_n$. Using observation (\ref{obser}) 
this will become
$$
\bigvee_{b \in Rng(H)} \bigvee_{i \in [\ell +1]} \tau(g)_{b^i}
$$
where $b = (b^1, \dots, b^{\ell +1}) \in \NNN$.

If $P$ had the strong fdp then for some $b$ and $i$:
$$
\bs_P(\tau(g)_{b^i}) \le N^{O(1)} \le m^{O(1)}\ .
$$
But that contradicts that $g$ is hard for all pps and, in particular, for $P$.

\qed

\section{A remark on uniformity} \label{unif}

Generator $g$ used in the proof of Theorem \ref{main} is not uniform while $\gadf$
used there too is uniform but we do not known if it is also hard. Hence we could not apply
directly Theorem \ref{enp} and we had to bypass this problem by a modified argument.

One way to deduce the hardness of $\gadf$ would be to prove that $g$ is $\bigvee$-hard,
a notion defined for this purpose in \cite{Kra-strong}: by
\cite[Thm.4.1]{Kra-strong} the $\bigvee$-hardness of $g$ would imply 
the hardness of $\gadf$. 
Generator $g$ is {\bf $\bigvee$-hard} for $P$ 
iff for any $c \geq 1$ only finitely many disjunctions of the form
\begin{equation} \label{vhard}
\bigvee_{b \in B} \tau(g_n)_{b} \ ,
\end{equation}
with $B \subseteq \mm$ have a $P$-proof of size at most $m^c$.

Ren et al. \cite{RWZ} showed that (their version of) the generator $g_P$ hard for $P$
has a weaker version of the property of being pseudo-surjective for $P$ 
and pseudo-surjectivity implies $\bigvee$-hardness. 
We will not define the pseudo-surjectivity
here (cf. \cite{Kra-dual}) but just state that their result implies
that no disjunction (\ref{vhard}) has a short $P$-proof under the an additional assumption
that $B$ is small. In order to use the $\bigvee$-hardness 
of $g$ to establish the hardness of $\gadf$
via \cite[Thm.4.1]{Kra-strong} one needs to allow $|B| = \ell + 1$, 
$\ell$ being the size of the gadget. Unfortunately this
falls well outside the bounds allowed in \cite{RWZ}. To determine whether
the generator $g$ is $\bigvee$-hard remains an open problem.

\bigskip
\noindent
{\large {\bf Acknowledgments:}}

I thank O.~Je\v zil, G.~Krej\v c\'{\i} and M.~Otrubov\' a (all in Prague)
for pointing out some misprints, to E.~Khaniki (Oxford) for updating some  
references and to J.~Pich (Oxford) for reminding me of my own observation in past
regarding the truth-table function and feasible interpolation.

\end{document}